\newenvironment{proof}[1][Proof]{\begin{trivlist}
\item[\hskip \labelsep {\bfseries #1}]}{\end{trivlist}}
\newenvironment{fact}[1][Fact]{\begin{trivlist}
\item[\hskip \labelsep {\bfseries #1}]}{\end{trivlist}}
\newcommand{\qed}{\nobreak \ifvmode \relax \else
      \ifdim\lastskip<1.5em \hskip-\lastskip
      \hskip1.5em plus0em minus0.5em \fi \nobreak
      \vrule height0.40em width0.6em depth0.25em\fi}
\begin{document}
%
\title{Minimum-Polytope-Based Linear Programming Decoder for LDPC Codes via ADMM Approach}

\author{Jing Bai, 
        Yongchao Wang, \emph{Member, IEEE}, 
        Francis C. M. Lau, \emph{Senior Member, IEEE}}

%

\markboth{}%
{}

\maketitle

\begin{abstract}
In this letter, we develop an efficient linear programming (LP) decoding algorithm for low-density parity-check (LDPC) codes.
We first relax the maximum likelihood (ML) decoding problem to a LP problem by using check-node decomposition.
Then, to solve the resulting LP problem, we propose an efficient iterative algorithm based on the alternating direction method of multipliers (ADMM) technique.
In addition, the feasibility analysis of the proposed algorithm is presented.
Furthermore, through exploiting the sparsity and orthogonality structures of the LP problem, the computational complexity of the proposed decoding algorithm increases linearly
with the length of the LDPC code.
Simulation results demonstrate that our proposed algorithm achieves better performance than other competing algorithms in terms of decoding time.
\end{abstract}

\begin{IEEEkeywords}
 Linear programming decoding, alternating direction method of multipliers (ADMM), minimum polytope. 
\end{IEEEkeywords}

\IEEEpeerreviewmaketitle

\section{Introduction}
Linear programming (LP) decoding has been considered as a promising decoding approach for low-density parity-check
(LDPC) codes \cite{FeldmanLP}.
Compared with the classical belief propagation (BP) decoder, LP decoding can be supported by theoretical guarantees for decoding performance.
However, the original LP decoder suffers from high complexity and thus cannot be
efficiently implemented in practice.
Therefore, reducing the complexity of LP decoding becomes an interesting but challenging topic.
There exist two popular research directions: formulating new small-scale LP decoding problems and designing more efficient problem-solving algorithms.

Considering the first direction, several studies have focused on transforming the maximum-likelihood (ML) decoding problem into LP ones with fewer constraints.
For instance, the authors in \cite{adaptive-LP} proposed a small-sized LP decoding model by adaptively adding necessary constraints.
In \cite{check-node-decompetition1}, another new LP formulation,
in which the variables and constraints grow only linearly with the check node degree,
was introduced.

As for the second research direction, increasing research efforts have been devoted to investigating how the inherent structures of LP problems be exploited to develop more efficient decoding algorithms.
In \cite{Barman-ADMM}, a distributed algorithm based on the alternating direction method of multipliers (ADMM) technique \cite{Boyd} was first proposed to solve the original LP decoding problem \cite{FeldmanLP}.
However, this ADMM-based algorithm involves complex check-polytope projection and thus limits its efficiency.
Subsequently, the authors in \cite{efficient-projection1,efficient-projection2,efficient-projection3} independently optimized the projection algorithm \cite{Barman-ADMM} to further reduce the complexity of the LP decoder \cite{Barman-ADMM}.
In \cite{jiao-zhang}, a projection-reduction method was proposed by decreasing the number of projection operations to simplify the ADMM decoding algorithm.

To the best of our knowledge, developing efficient algorithms for other types of LP decoders has not been well investigated in the existing literatures.
In this letter, we consider the minimum-polytope-based LP (MPB-LP) decoder which is formulated by check-node decomposition, and develop an efficient decoding algorithm based on the ADMM technique.
The main contributions of this work are threefold.
\begin{itemize}
 \item We show the sparsity and orthogonality properties of the MPB-LP problem. By exploiting these structures, we propose an ADMM-based decoding algorithm in which each updating step can be solved efficiently.
 \item We analyze the feasibility of the proposed ADMM-based algorithm and further show that the complexity of our proposed algorithm increases linearly with the length of LDPC codes.
 \item Simulation results present that the proposed ADMM-based LP decoder consumes less decoding time than competing LP decoders, and displays superior error-rate performance than BP decoder at high signal-to-noise (SNR) regions with a shorter decoding time.
\end{itemize}

\section{Minimum-polytope-based LP decoding model}\label{LP fomulation_Pre}

\subsection{General MPB-LP decoding model}\label{problem-formulation}
Consider a binary LDPC code $\mathcal{C}$ defined by an $m \times n$ parity-check matrix $\mathbf{H} = [H_{ji}]_{m \times n}$.
Let $\mathcal{I} = \{1,\ldots,n\}$ and $\mathcal{J} = \{1,\ldots,m\}$ denote the set of variable nodes and check nodes of $\mathcal{C}$ respectively.
Let the degree of a check node be the number of its neighboring variable nodes.
Denote the minimum polytope $\mathcal{P}_{3}$ \cite{check-node-decompetition 1} describing the convex hull of the parity-check constraint of a degree-3 check node by
\begin{equation}\label{degree-3 minimum polytope}
  \begin{split}
    &\hspace{0cm} \mathcal{P}_{3} = \{(x_{1},x_{2},x_{3}):x_{1} + x_{2} + x_{3} \leq 2,   \\
    &\hspace{1.6cm} ~x_{1}- x_{2}-x_{3} \leq  0, ~ - x_{1}+x_{2}- x_{3} \leq 0,  \\
    &\hspace{1.2cm} ~ - x_{1}-x_{2}+ x_{3} \leq 0, ~~~ 0\leq x_{1},x_{2},x_{3} \leq 1 \}.
  \end{split}
\end{equation}

Now suppose that a codeword $\mathbf{x} \in \mathcal{C}$ is transmitted over a noisy memoryless binary-input output-symmetric channel, and a vector $\mathbf{r}$ is received.
By decomposing each high-degree check node into a number of degree-$3$ check nodes with a certain number auxiliary variables, the maximum likelihood (ML) decoding problem can be relaxed to the following MPB-LP form \cite{check-node-decompetition1}
{\setlength\abovedisplayskip{3pt}
\setlength\belowdisplayskip{3pt}
\setlength\jot{1pt}
\begin{equation}\label{cascaded LP}
 \begin{split}
   &\mathop {\min }\limits_\mathbf{x} \hspace{0.3cm} \pmb{\gamma}^{T}\mathbf{x}, \hspace{0.5cm} \textrm{s.t.} \hspace{0.3cm} \mathbf{x} \in \bigcap_{j=1}^{m} \Phi_{j},\ \ j\in\mathcal{J},
 \end{split}
\end{equation}
where} $\pmb{\gamma}$ is the vector of log-likelihood ratios (LLR) defined by $\gamma_{i} = {\rm{log}}{\left(\frac{{\rm{Pr}}(r_{i}|x_{i}=0)}{{\rm{Pr}}(r_{i}|x_{i}=1)}\right)}$, and $\Phi_{j}$ denotes the intersection of all the minimum polytopes for the $j$th parity-check constraint.

\subsection{Standard MPB-LP decoding model}
To make the above MPB-LP problem \eqref{cascaded LP} easier to analyze, we
convert it into a standard LP form.
We denote $d_j$ as the degree of the $j$-th check node in $\mathcal{C}$.
Let $\Gamma_{c}$ and $\Gamma_a$ be the number of minimum polytopes and
the number of auxiliary variables, respectively.
Then $\Gamma_{c}= \sum_{j=1}^{m}(d_j-2)$ and $\Gamma_{a} = \sum_{j=1}^{m}(d_j-3)$  \cite{check-node-decompetition1}.
We further define $\mathbf{q} = \left[ \begin{array}[pos]{l}
\pmb{\gamma} \\
 \mathbf{0}_{\Gamma_a} \end{array} \right]$ and
$\mathbf{v} =
 \left[ \begin{array}[pos]{l}
\mathbf{x} \\
 \mathbf{u}_{\Gamma_a} \end{array} \right]$
where $\mathbf{0}_{\Gamma_a}$ is the length-$\Gamma_a$
all-zero vector and  $\mathbf{u}_{\Gamma_a} \in [0,1]^{\Gamma_a}$
represents the auxiliary variable vector.

Since each minimum polytope is characterized by three variable nodes,
 we assume that the $\tau$-th minimum polytope
($\tau = 1, 2, \ldots,\Gamma_c$)
is related to the $v_{\tau_{1}}$-th, $v_{\tau_{2}}$-th, and $v_{\tau_{3}}$-th variables in $\mathbf{v}$.
For the $\tau$-th minimum polytope, we first construct a corresponding
matrix $\mathbf{Q}_{\tau}\in\{0,1\}^{3 \times (n+\Gamma_a)}$, in which each row contains
only one nonzero element and the indices of the three nonzero elements are
given by
$(1, v_{\tau_{1}}), (2, v_{\tau_{2}})$, and $(3, v_{\tau_{3}})$.
Based on the first four inequalities in \eqref{degree-3 minimum polytope}, we also construct a vector $\mathbf{t}=[2\; 0\; 0\; 0]^T $ and a matrix
\begin{equation}\label{F matrix}
  \mathbf{F} = \begin{bmatrix}
   \begin{array}{ccc}
    ~~1 & ~~ 1 & ~~1 \\
    ~~1 & -1  & -1 \\
   -1 & ~~ 1 & -1 \\
   -1 & -1 & ~~1 \\
   \end{array}
  \end{bmatrix},
\end{equation}
where $(\cdot)^T$ represents the transpose operation.
Then the $\tau$-th minimum polytope ($\tau = 1, 2, \ldots,\Gamma_c$) can be expressed as
\begin{equation}\label{minimum polytope}
  \mathbf{F} \mathbf{Q}_{\tau}\mathbf{v}\preceq \mathbf{t}.
\end{equation}
We further define a matrix $\mathbf{A}=[\mathbf{F} \mathbf{Q}_1;...;\mathbf{F} \mathbf{Q}_\tau;...;\mathbf{F} \mathbf{Q}_{\Gamma_c}]$, which is cascaded by each $\mathbf{F} \mathbf{Q}_\tau$, and a vector $\mathbf{b}=\mathbf{1}_{{\Gamma_c}} \otimes \mathbf{t}$ where $\otimes$ denotes the Kronecker product and
$\mathbf{1}_{{\Gamma_c}}$ is the length-$\Gamma_c$ all-one vector.
Letting $M=4\Gamma_c$ and $N = n+\Gamma_a$, then $\mathbf{A}\in\mathbb{R}^{M \times N}$
and $\mathbf{b} \! \in \! \mathbb{R}^{M}$.
Subsequently, we reformulate the MPB-LP problem \eqref{cascaded LP} equivalently to the following linear program
\begin{equation}\label{C-LP model}
  \begin{split}
    &\hspace{0.0cm} \underset{\mathbf{v}}\min \hspace{0.2cm} \mathbf{q}^{T}\mathbf{v}, \hspace{0.3cm} \textrm{s.t.} \hspace{0.2cm} \mathbf{Av} \preceq \mathbf{b}, \hspace{0.2cm} \mathbf{v} \in [0,1]^N.
  \end{split}
\end{equation}

The problem \eqref{C-LP model} certainly can be solved by general-purpose LP solvers, such as the SIMPLEX method or the interior-point method.
However, the high computational complexity limits their practical applications.
Therefore, in the following, we will propose an efficient algorithm based on the ADMM technique to solve the problem \eqref{C-LP model}.

\section{Proposed ADMM-Based Decoding Algorithm}
This section discusses the detailed ADMM-based algorithm for solving the MPB-LP decoding problem \eqref{C-LP model}, and presents its feasibility and complexity analysis.

\subsection{ADMM algorithm framework}
To make the problem \eqref{C-LP model} fitting into the ADMM template, we have to add an auxiliary variable vector $\mathbf{w}\in\mathbb{R}_{+}^{M}$ so as to change the inequality into an equality constraint.
With $\mathbf{w}$, the linear program \eqref{C-LP model} is equivalent to
{\setlength\abovedisplayskip{3pt}
\setlength\belowdisplayskip{3pt}
\setlength\jot{1pt}
\begin{equation}\label{ADMM model}
 \begin{split}
   &\hspace{0.0cm} \mathop {\min }\limits_\mathbf{v} \hspace{0.2cm} \mathbf{q}^{T}\mathbf{v},              \\
   &\hspace{0.2cm}\textrm{s.t.} \hspace{0.30cm} \mathbf{Av} + \mathbf{w} = \mathbf{b},\\
   &\hspace{0.95cm}  \mathbf{v}\in [0,1]^{N}, \hspace{0.3cm} \mathbf{w} \in \mathbb{R}_{+}^{M}.
 \end{split}
\end{equation}}

The augmented Lagrangian function (using the scaled dual variable) for  problem \eqref{ADMM model}
 is expressed by
\begin{equation}\label{augmented-Lagrangian-LP}
  \emph{L}_{\mu}(\mathbf{v},\mathbf{w},\pmb{\lambda}) = \mathbf{q}^{T}\mathbf{v} + \frac{\mu}{2} \| \mathbf{Av+w-b}+\pmb{\lambda}\|_2^{2}-\frac{\mu}{2} \|\pmb{\lambda}\|_2^{2},
\end{equation}
where $\pmb{\lambda} \in \mathbb{R}^{M}$ denotes the scaled dual variable vector, $\mu > 0$ is the penalty parameter, and $\lVert\cdot\rVert_2$ represents the 2-norm operator.
Then a typical ADMM-based algorithm for solving  \eqref{ADMM model} can be described by the following iterations
{\setlength\abovedisplayskip{3pt}
\setlength\belowdisplayskip{3pt}
\setlength\jot{1pt}
\begin{subequations}\label{ADMM-update-lp}
  \begin{align}
   & \mathbf{v}^{(k+1)} = \mathop{\textrm{argmin}}\limits_{\mathbf{v}\in [0,1]^{N}} ~ \emph{L}_{\mu}(\mathbf{v},\mathbf{w}^{(k)},\pmb{\lambda}^{(k)}), \label{v-update-ori}  \\
   & \mathbf{w}^{(k+1)} = \mathop{\textrm{argmin}}\limits_{\mathbf{w} \in \mathbb{R}_{+}^{M}} ~ \emph{L}_{\mu}(\mathbf{v}^{(k+1)},\mathbf{w},\pmb{\lambda}^{(k)}), \label{w-update-ori} \\
   & \pmb{\lambda}^{(k+1)} = \pmb{\lambda}^{(k)} +(\mathbf{Av}^{(k+1)}+\mathbf{w}^{(k+1)}-\mathbf{b}), \label{lamda-update-ori}
 \end{align}
\end{subequations}
where} $k$ denotes the iteration number.

Observing \eqref{ADMM-update-lp}, the majority of its computational complexity depends on solving \eqref{v-update-ori} and \eqref{w-update-ori}.
However, both of them can be implemented efficiently since matrix $\mathbf{A}$ owns the following properties:
\begin{fact}
  The matrix $\mathbf{A}$ possesses the following properties.
  \begin{itemize}
    \item The elements of $\mathbf{A}$ (i.e., $A_{ji}$) are either $0$, $-$1, or $1$.
    \item $\mathbf{A}$ is sparse.
    \item The column vectors of $\mathbf{A}$ are orthogonal to one another.
  \end{itemize}
\end{fact}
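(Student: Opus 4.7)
The plan is to verify each of the three properties directly from the block construction $\mathbf{A} = [\mathbf{F}\mathbf{Q}_1; \ldots; \mathbf{F}\mathbf{Q}_{\Gamma_c}]$. The fundamental observation is that $\mathbf{Q}_\tau$ is a $3 \times N$ selector matrix whose three rows each contain a single $1$, located in columns $v_{\tau_1}, v_{\tau_2}, v_{\tau_3}$ respectively. Consequently, $\mathbf{F}\mathbf{Q}_\tau$ is obtained by spreading the three columns of $\mathbf{F}$ into those positions of an otherwise zero $4 \times N$ block.

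The first two properties then follow almost immediately from this view. Since every entry of $\mathbf{F}$ lies in $\{-1,+1\}$, the spreading operation only transfers these values and inserts zeros; hence every entry of every block $\mathbf{F}\mathbf{Q}_\tau$, and therefore of $\mathbf{A}$, lies in $\{-1,0,+1\}$. For sparsity, I would observe that every row of $\mathbf{F}\mathbf{Q}_\tau$ contains exactly three nonzero entries (one per selected variable), so every row of $\mathbf{A}$ has only three nonzeros out of $N = n + \Gamma_a$ columns, giving a density $3/N$ that decays as the code length grows.

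The third property is the substantive one. I would compute the inner product $\mathbf{a}_i^{T}\mathbf{a}_{i'}$ of two distinct columns of $\mathbf{A}$ as a sum of contributions from each block $\tau = 1,\ldots,\Gamma_c$. A direct $4\times 3$ calculation shows $\mathbf{F}^{T}\mathbf{F} = 4\mathbf{I}_3$, so the three columns of $\mathbf{F}$ are mutually orthogonal. For each block $\tau$, three subcases arise: if the triple $(v_{\tau_1}, v_{\tau_2}, v_{\tau_3})$ contains neither $i$ nor $i'$, or only one of them, then the two partial column segments have disjoint supports and contribute $0$; if it contains both, then $i$ and $i'$ occupy distinct positions $j \neq j'$ in the triple, and the contribution equals $\mathbf{F}_{:,j}^{T}\mathbf{F}_{:,j'} = 0$. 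Summing over $\tau$ yields the desired orthogonality.

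The main subtlety is justifying that $i$ and $i'$ truly occupy distinct positions within any common block; if they were to share a position, the contribution would degenerate into $\|\mathbf{F}_{:,j}\|_2^{2} = 4 \neq 0$ and orthogonality would fail. This distinctness is guaranteed by the MPB-LP construction itself, which decomposes each high-degree check into cascaded degree-$3$ checks using freshly introduced auxiliary variables, so each minimum polytope references three pairwise distinct indices in $\mathbf{v}$. Once this combinatorial point is spelled out, the proof reduces to the elementary verification $\mathbf{F}^{T}\mathbf{F} = 4\mathbf{I}_3$.
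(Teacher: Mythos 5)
Your proof is correct and follows essentially the same route as the paper's: entries of $\mathbf{A}$ inherit the $\{-1,0,1\}$ values of $\mathbf{F}$ through the selector matrices $\mathbf{Q}_\tau$, sparsity follows by counting nonzeros per block, and column orthogonality of $\mathbf{A}$ is reduced to the pairwise orthogonality of the columns of $\mathbf{F}$ summed over the cascaded blocks. The only difference is that you are more careful than the paper, which silently assumes the three indices $v_{\tau_1},v_{\tau_2},v_{\tau_3}$ of each minimum polytope are pairwise distinct; you correctly identify this as the point on which orthogonality actually hinges and justify it from the check-node decomposition.
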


\begin{proof}
See Appendix \ref{fact_proof}.
\end{proof}

In the following we will show that each sub-problem in the above algorithm \eqref{ADMM-update-lp} can be solved efficiently by exploiting these favourable features of matrix $\mathbf{A}$.

\subsection{Solving the sub-problem \eqref{v-update-ori}}

Since matrix $\mathbf{A}$ is column orthogonal, $\mathbf{A}^{T}\mathbf{A}$ is a diagonal matrix. This implies that variables in the problem \eqref{v-update-ori} are separable.
Therefore, solving the problem \eqref{v-update-ori} can be equivalent to solving the following $N$ subproblems independently
{\setlength\abovedisplayskip{3pt}
\setlength\belowdisplayskip{3pt}
\setlength\jot{1pt}
\begin{subequations}\label{vsp}
  \begin{align}
   & \underset{\mathbf{v}}\min \hspace{0.1cm} \frac{1}{2}(\mu e_i)v_i^2 + \Big(q_i\!+\!\mu \hat{\mathbf{a}}_i^T\big(\mathbf{w}^{(k)}\!-\!\mathbf{b}\!+\!\pmb{\lambda}^{(k)}\big)\Big)v_i,  \label{vsp a}\\
   &\hspace{0.1cm} \textrm{s.t.}\hspace{0.2cm} v_i\in [0,1], \label{vsp b}
  \end{align}
\end{subequations}
where} $\mathbf{e}={\rm diag}(\mathbf{A}^T\mathbf{A})=[e_{1},\cdots,e_{n+\Gamma_a}]^{T}$ and $\rm{ diag}(\cdot)$ denotes the operator of extracting the diagonal vector of a matrix, and $\hat{\mathbf{a}}_i$ is the $i$-th column of the matrix $\mathbf{A}$.
Then, the procedures for solving the sub-problem \eqref{vsp} can be summarized as follows: setting the gradient of the objective \eqref{vsp a} to zero, then projecting the resulting solution to the interval $[0,1]$. Finally, we can obtain the the following solution of the problem \eqref{vsp} as
{\setlength\abovedisplayskip{3pt}
\setlength\belowdisplayskip{3pt}
\setlength\jot{1pt}
\begin{equation}\label{vi-update}
v_{i}^{(k+1)} = \Pi_{[0,1]}\bigg(\frac{1}{e_i} \Big( \hat{\mathbf{a}}_i^T \big(\mathbf{b}- \mathbf{w}^{(k)}-\pmb{\lambda}^{(k)}\big)-\frac{q_{i}}{\mu}\Big)\bigg),
\end{equation}
where} $\Pi_{[0,1]}(\cdot)$ denotes the Euclidean projection operator onto the interval $[0,1]$.

\subsection{Solving the sub-problem \eqref{w-update-ori}}
Observing \eqref{w-update-ori}, we can find that the variables in $\mathbf{w}$ are also separable in either objective or constraints.
Hence, the sub-problem \eqref{w-update-ori} can be solved by fixing $\mathbf{v}$ and $\pmb{\lambda}$, and then minimizing $\emph{L}_{\mu}(\mathbf{v},\mathbf{w},\pmb{\lambda})$ under the constraint $\mathbf{w} \in \mathbb{R}_{+}^{M}$ (i.e., $\mathbf{w} \in [0,+\infty]^{M}$).
Using a technique similar to that described in the previous section,
$\mathbf{w}^{(k+1)}$ can be updated by
{\setlength\abovedisplayskip{3pt}
\setlength\belowdisplayskip{3pt}
\setlength\jot{1pt}
\begin{equation}\label{w-update-lp}
   \mathbf{w}^{(k+1)} = \Pi_{[0,+\infty]^{M}}\Big(\mathbf{b}-\mathbf{Av}^{(k+1)}-\pmb{\lambda}^{(k)}\Big),  \\
\end{equation}
where} $\Pi_{[0,+\infty]^{M}}(\cdot)$ denotes the Euclidean projection onto the positive quadrants $[0,+\infty]^{M}$.
Clearly, the $\mathbf{w}$-update can also be written in a component-wise manner,
i.e.,
{\setlength\abovedisplayskip{3pt}
\setlength\belowdisplayskip{3pt}
\setlength\jot{1pt}
\begin{equation}\label{w_component_update_lp}
   w_{j}^{(k+1)} = \Pi_{[0,+\infty]}\Big(b_{j}-\mathbf{a}_{j}^{T}\mathbf{v}^{(k+1)}-\lambda_{j}^{(k)}\Big),  \\
\end{equation}}
where $\mathbf{a}_j^{T}$ is the $j$-th row of matrix $\mathbf{A}$ and $j\in \{1,2,\ldots,M\}$.

Finally, we summarize the proposed minimum-polytope-based ADMM-solved
(MPB-ADMM) LP decoding method in \emph{Algorithm \ref{ADMM-LP-alg}}.
Here we make three remarks for the proposed algorithm:
\begin{enumerate}[(1)]
  \item In \emph{Algorithm \ref{ADMM-LP-alg}}, the multiplications with respect to matrix $\mathbf{A}$, such as $\hat{\mathbf{a}}_i^T \mathbf{b}$ \emph{et al}., are just considered as addition operations based on the first property of matrix $\mathbf{A}$.
  \item All the variables in $\mathbf{v}$ can be updated in parallel.
  \item Compared with previous works \cite{Barman-ADMM,efficient-projection1,efficient-projection2,efficient-projection3,jiao-zhang} that require complex check-polytope projections, each variable update \eqref{w_component_update_lp} in $\mathbf{w}$ only involves a simple Euclidean projection onto the positive quadrant.
      Hence the $\mathbf{w}$-update significantly reduces the decoding complexity of the proposed ADMM algorithm and can be implemented much more easily in practice.
\end{enumerate}

\subsection{Feasibility analysis}

The convergence analysis of the our proposed algorithm can be found in \cite{Boyd}.
Since the LP problem \eqref{C-LP model} is convex, our proposed algorithm is guaranteed to converge to the global solution $\mathbf{v}^{*}$.
On the other hand, as shown in \cite{check-node-decompetition1}, the MPB-LP decoder \eqref{C-LP model} is equivalent to the original LP decoder.
Therefore, our proposed decoder shares the same properties of
the original LP decoder such as \emph{all-zeros assumption} and \emph{ML-certificate property} \cite{FeldmanLP}.
As a result, if the output of \emph{Algorithm \ref{ADMM-LP-alg}} is an integral solution denoted by $\mathbf{v}^{*}$, the \emph{ML-certificate property} ensures that $\mathbf{x}^{*}$ corresponds to the ML solution.

\subsection{Complexity analysis}

Since the matrix $\mathbf{A}$ contains only $0, -1$, or $1$ elements, its related matrix multiplications can all be performed using addition operations.
Thus, by  taking only the multiplications related to $1/{e}_{i}$ into account,
the computational complexity of updating $\mathbf{v}^{(k+1)}$ in Algorithm \ref{ADMM-LP-alg} at each iteration is $\mathcal{O}(N)$.
By exploiting the desirable properties of matrix $\mathbf{A}$,
the computations of $\mathbf{w}^{(k+1)}$ and $\pmb{\lambda}^{(k+1)}$
can be completed using only addition operations.
Hence their computational complexities are both $\mathcal{O}(1)$.
Combining all the above analysis, the total computational complexity of our proposed algorithm in each iteration is $\mathcal{O}(N)$, i.e., $\mathcal{O}(n+\Gamma_{a})$.
Moreover, $\Gamma_{a}$ is comparable to the code length since the parity check matrix $\mathbf{H}$ is sparse in the case of LDPC codes.
Thus, the complexity of the proposed ADMM-based algorithm in each iteration is linear to the length of LDPC codes.

In \emph{Table \ref{Complexity-Comparison}}, we compare the complexities of our proposed algorithm and the check-polytope-based ADMM-solved (CPB-ADMM) LP decoding algorithms in \cite{efficient-projection1} and \cite{efficient-projection3}.
We can observe that our proposed MPB-ADMM LP decoder achieves
the lowest computational complexity.

\begin{algorithm}[t]
\caption{{\color{black} Proposed MPB-ADMM LP decoding algorithm}} 
\label{ADMM-LP-alg}
\begin{algorithmic}[1]
\STATE Calculate the log-likelihood ratio $\pmb{\gamma}$ based on received vector $\mathbf{r}$ and construct the vector $\mathbf{q}$ based on $\pmb{\gamma}$. \\
\STATE Construct the $M \times N$ matrix $\mathbf{A}$ based on the parity-check matrix $\mathbf{H}$ and construct the vector $\mathbf{b}$.
\STATE  Set $\mathbf{w}^{(0)}$ and $\pmb{\lambda}^{(0)}$ to the length-$M$ all-zero vector, and initialize the iteration number $k=0$. \\
\STATE      \textbf{repeat}  \\
\STATE~~\textbf{for} $i \in \{1, 2, \ldots, N\}$ \textbf{do} \\ 
\STATE~~~~~Update $v_{i}^{(k+1)} \!=\!\Pi_{[0,1]}\bigg(\frac{1}{e_i} \Big( \hat{\mathbf{a}}_i^T \big(\mathbf{b}- \mathbf{w}^{(k)}-\pmb{\lambda}^{(k)}\big)\!-\!\frac{q_{i}}{\mu}\Big)\bigg)$. \\
\STATE~~\textbf{end for} \\
\STATE~~\textbf{for} $j \in \{1, 2, \ldots, M\}$ \textbf{do}   \\ 
\STATE~~~~~Update $w_{j}^{(k+1)} = \Pi_{[0,+\infty]}\Big(b_{j}-\mathbf{a}_{j}^{T}\mathbf{v}^{(k+1)}-\lambda_{j}^{(k)}\Big)$. \\
\STATE~~~~~Update $\lambda_{j}^{(k+1)} = \lambda_{j}^{(k)} + (\mathbf{a}_{j}^{T}\mathbf{v}^{(k+1)}+w_{j}^{(k+1)}-b_{j})$. \\
\STATE~~\textbf{end for} \\
\STATE\textbf{until} $\parallel \mathbf{Av}^{(k+1)}+\mathbf{w}^{(k+1)}-\mathbf{b} \parallel_{2}^{2} \ \leq \ \xi$ \\
~~~~~~ and $\parallel \mathbf{w}^{(k+1)}-\mathbf{w}^{(k)} \parallel_{2}^{2} \ \leq \ \xi$.
\end{algorithmic}
\end{algorithm}

\begin{table*}\footnotesize
\caption{Complexity of Different LP Decoding Algorithms Per Iteration. $d$ denotes the largest check-node degree. $I_{max}$ is the  maximum number of iterations set in the projection algorithm of \cite{efficient-projection3}.}
\label{Complexity-Comparison}
\begin{center}
\begin{tabular}{|c|c|c|c|c|c|}
\hline
\hline
\textbf{Decoding Algorithm}& \textrm{Original variables}&\textrm{Auxiliary variables}&\textrm{Dual variables} & Overall Complexity\\\hline
Proposed MPB-ADMM LP & $\mathcal{O}(n+m(d-3))$ & $\mathcal{O}(1)$ & $\mathcal{O}(1)$ & $\mathcal{O}(n+m(d-3))$\\\hline
CPB-ADMM LP \cite{efficient-projection1} & $\mathcal{O}(n)$ & $\mathcal{O}(md \log d)$ & $\mathcal{O}(1)$ & $\mathcal{O}(n+md \log d)$\\\hline
CPB-ADMM LP \cite{efficient-projection3} & $\mathcal{O}(n)$ & $\mathcal{O}(mI_{max})$ & $\mathcal{O}(1)$ & $\mathcal{O}(n+mI_{max})$\\\hline
 \end{tabular}
 \end{center}
\end{table*}

\section{Simulation Results}\label{simulation}
In this section, we present the simulation results of our proposed MPB-ADMM LP decoder, the CPB-ADMM LP decoders in \cite{efficient-projection1} and \cite{efficient-projection3}, and the classic sum-product BP decoder.
The simulations are performed on a computer with i5-3470 3.2GHz CPU and 16 GB RAM under the Microsoft Visual C++ 6.0 environment.

We simulate two widely-used LDPC codes, namely the $(2640,1320)$ rate-1/2 (3,6)-regular ``Margulis'' LDPC code $\mathcal{C}_{1}$ \cite{Mackay-code} and the $(576,288)$ rate-1/2 irregular LDPC code $\mathcal{C}_{2}$ in the 802.16e standard \cite{802.16-code}.
All the code bits $\mathbf{x}$ are sent over an additive white Gaussian noise (AWGN) channel using binary phase-shift-keying (BPSK).
In our decoding algorithm,
the penalty parameter $\mu$ is set to $0.6$ and $0.8$ for $\mathcal{C}_{1}$ and $\mathcal{C}_{2}$, respectively\footnote{The parameter $\mu$ is chosen reasonably by plotting the error-rate performance and decoding time as a function of $\mu$ respectively. Due to space limitation, we do not give these detailed figures.}.
The maximum number of decoding iterations is set to $500$ and the tolerance $\xi$ is set to $10^{-5}$ \footnote{The maximum number of iterations 500 and tolerance $10^{-5}$ is considered good enough to achieve the desired tradeoff between error rate and complexity.}.

\begin{figure}[htbp]
  \centering
  \centerline{\psfig{figure=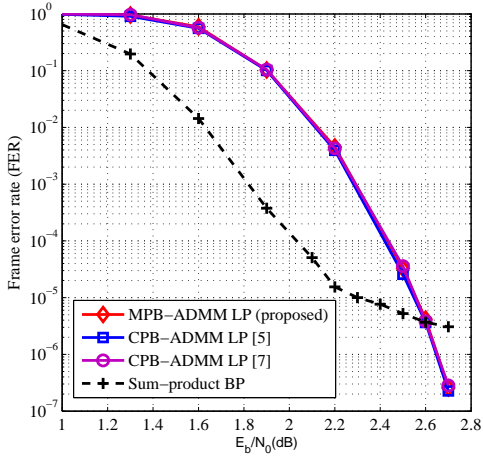,width=7cm,height=6.5cm}}
  \caption{FER performance of $\mathcal{C}_{1}$ using different decoders. $\mathcal{C}_{1}$ denotes the $(2640,1320)$ regular ``Margulis'' LDPC code.}
  \label{fer2640}
\end{figure}

\begin{figure}[htbp]
  \centering
  \centerline{\psfig{figure=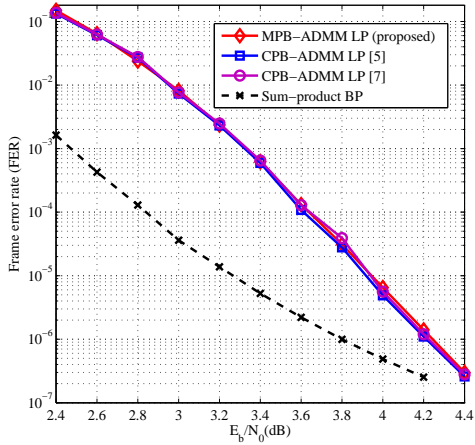,width=7cm,height=6.5cm}}
  \caption{FER performance of $\mathcal{C}_{2}$ using different decoders. $\mathcal{C}_{2}$ denotes the $(576,288)$ irregular LDPC code in the 802.16e standard.}
  \label{fer576}
\end{figure}

\begin{figure}[htbp]
  \centering
  \centerline{\psfig{figure=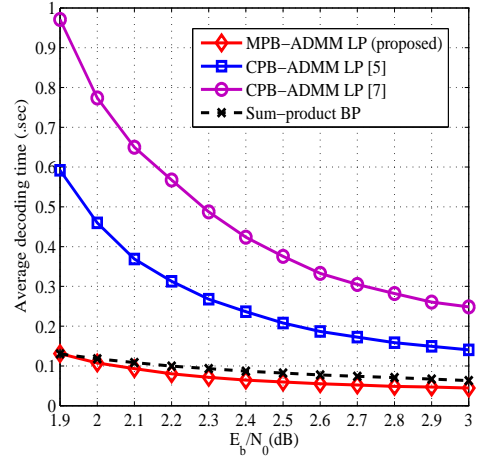,width=7cm,height=6.5cm}}
  \caption{Decoding time of $\mathcal{C}_{1}$ using different decoders. $\mathcal{C}_{1}$ denotes the $(2640,1320)$ regular ``Margulis'' LDPC code.}
  \label{time2640}
\end{figure}

\begin{figure}[htbp]
  \centering
  \centerline{\psfig{figure=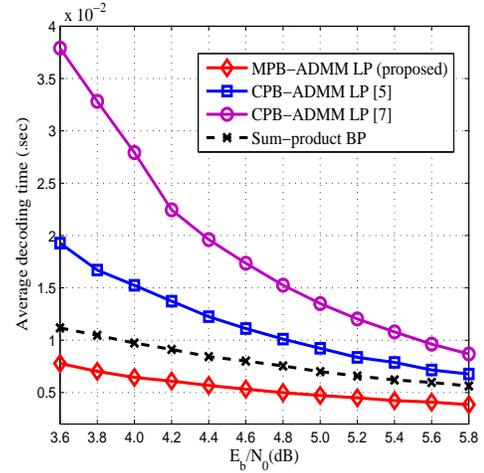,width=7cm,height=6.5cm}}
  \caption{Decoding time of $\mathcal{C}_{2}$ using different decoders. $\mathcal{C}_{2}$ denotes the $(576,288)$ irregular LDPC code in the 802.16e standard.}
  \label{time576}
\end{figure}

Fig.~\ref{fer2640} and Fig.~\ref{fer576} plot the frame-error-rate (FER) performances of $\mathcal{C}_{1}$ and $\mathcal{C}_{2}$, respectively, with different decoding algorithms.
In the case of the MPB/CPB-ADMM LP decoding algorithms, we collect 50 error frames at $E_{b}/N_{0} = 2.7$ dB in Fig.~\ref{fer2640} and $E_{b}/N_{0} = 4.4$ dB in Fig.~\ref{fer576} and 200 for all other SNRs.
The results indicate that our proposed MPB-ADMM LP decoder has the same performance
as the CPB-ADMM LP decoders in  \cite{efficient-projection1} and \cite{efficient-projection3}. Moreover, our proposed decoder outperforms the sum-product BP decoder
at high SNR regions without showing any error floor.

In Fig.~\ref{time2640} and Fig.~\ref{time576}, we make a comparison in the average decoding time of our proposed algorithm and other competing decoding algorithms at different SNRs for $\mathcal{C}_{1}$ and $\mathcal{C}_{2}$, respectively.
The decoding times in these two figures are averaged over one million frames for each decoder.
And note that over relaxation (OR) and early termination (ET) techniques are both applied in ADMM-based LP decoders.
From Fig.\ref{time2640} and Fig.\ref{time576}, we observe that our proposed MPB-ADMM LP decoder is the most efficient among all the competing decoding algorithms, which is consistent with the complexity analysis of \emph{Table \ref{Complexity-Comparison}}.
To be specific, in Fig.~\ref{time2640} we see that our decoding algorithm roughly saves 31\%, 69\% and 83\% decoding time respectively for $\mathcal{C}_{1}$ when $E_{b}/N_{0}=2.8$ dB in comparison with the sum-product BP decoding and the CPB-ADMM LP algorithms \cite{efficient-projection1} and \cite{efficient-projection3}.
Besides, Fig.~\ref{time576} shows that our proposed decoder roughly reduces 32\%, 46\% and 63\% time respectively for $\mathcal{C}_{2}$ when $E_{b}/N_{0}=5.2$ dB compared with the sum-product BP decoder and the other two CPB-ADMM LP decoders.

\section{Conclusion}
In this letter, we propose an efficient LP decoding algorithm based on ADMM technique for LDPC codes.
By exploiting the sparsity and orthogonality of the formulated MPB-LP model, we solve this resulting LP problem efficiently by the ADMM-based algorithm.
Detailed analysis shows that the complexity of the proposed algorithm at each iteration grows linearly with the length of LDPC codes.
Simulation results confirm the efficiency of the proposed LP decoder.

\appendices
\section{Proof of Fact 1}\label{fact_proof}
\begin{proof} We consider the first property of the matrix $\mathbf{A}$. Since each row in the matrix $\mathbf{Q}_{\tau}$ includes one nonzero element ``1'', any element in the matrix $\mathbf{F}\mathbf{Q}_{\tau}$ is either 0, $-$1, or 1.
The same applies to the matrix $\mathbf{A}$.
Moreover, we can see that there are only 12 nonzero elements in $\mathbf{F}\mathbf{Q}_{\tau}$.
It means that there are only $12\Gamma_c$ nonzero elements in the matrix $\mathbf{A}$.
It is far smaller than the size, $4\Gamma_c\times N$, of the matrix $\mathbf{A}$.
Therefore, we obtain the sparsity of $\mathbf{A}$.
Furthermore, we note that any two column vectors in $\mathbf{F}$ are orthogonal to each other.
Thus, each two column vectors in $\mathbf{F}\mathbf{Q}_{\tau}$ are also orthogonal.
This implies that $\mathbf{A}$ is a matrix with orthogonal columns because $\mathbf{A}$ is formed by cascading the matrices $\{\mathbf{F}\mathbf{Q}_{\tau}: \tau=1,2,\ldots,\Gamma_c\}$. 
\end{proof}

%
%

\ifCLASSOPTIONcaptionsoff
  \newpage
\fi



%
%
%




\end{document}